\documentclass[12pt, a4paper]{extarticle}
\usepackage[margin=2cm]{geometry}
\usepackage[affil-it]{authblk}

\usepackage[english]{babel}
\usepackage{graphicx} 
\usepackage{wrapfig}
\usepackage{amsmath,amsthm,amssymb,scrextend}
\usepackage{booktabs}            
\usepackage{multirow}            
\usepackage{amsfonts}            
\usepackage{multicol}
\usepackage[shortlabels]{enumitem}
\usepackage{cutwin}
\setlength{\columnsep}{1cm}
\usepackage{float}
\usepackage{csquotes}
\usepackage{enumitem}
\usepackage{tikz}
\usepackage{hyperref}

\def \bf#1 {\textbf{#1 }}
\def \sumt {\sum\limits}
\providecommand{\keywords}[1]
{
  \small	
  \textbf{\textit{Keywords:}} #1
}

\renewenvironment{proof}{\begin{addmargin}[1em]{0em}\begin{newproof}}{\end{newproof}\end{addmargin}\qed}
\newtheorem{thm}{Theorem}

\newtheorem*{lm*}{Lemma}
\newtheorem{defin}{Definition}
\newtheorem*{defin*}{Definition}

\def \sumt {\sum\limits}

\DeclareMathOperator{\out}{out}
\DeclareMathOperator{\xor}{xor}
\DeclareMathOperator{\prsum}{sum}
\def \cN {\mathcal{N}}

\def \l {\lambda}

\def \d {\partial}

\begin{document}

\title{A series of real networks invariants}
\author{Mikhail Tuzhilin%
  \thanks{Affiliation: HSE University, Moscow State University, Electronic address: \texttt{mikhail.tuzhilin@math.msu.ru};}
}
\date{}
\maketitle

\begin{abstract}
In this article, we propose a generalization of two well-known invariants of real networks: degree and ksi-centrality. More precisely, we found a set of centralities (degree centrality and ksi-centrality for the cases $j = 0, 1$) based on the Laplacian matrix that closely match the Weibull distribution; for real networks, they are right-skewed, while for artificial networks, they are centered. A threshold of 1 in the Pearson skewness coefficients accurately distinguishes these real networks from artificial ones.
\end{abstract}

\keywords{Centralities, local and global characteristics of networks, Laplacian}


\section{Introduction}

One of the most well-known invariant of real networks or networks based on a real-world data is the degree centrality. Albert, Jeong, and Barabasi~\cite{BA1} found scale-free property of real networks: for real networks the degree centrality has power-low distribution and for random networks this distribution is different. Also they provided an algorithm~\cite{BA2} to construct a network that satisfies scale free property, however the average clustering coefficient of this network was insufficient, i.e. the network was not small-world. Boccaletti, Hwang, and Latora~\cite{BH} propose an algorithm to construct a scale-free network with a large average clustering coefficient. In~\cite{Tuz} author found that small-world and scale-free properties are insufficient for real-world networks. More precisely, the author proposed a new measure of centrality, called ksi-centrality, whose distribution well fitted by the Weibull distribution, has a linear structure with a "heavy tail" on the log plot, is right-skewed for real networks, and is central for  networks models (random, Barabási-Albert, Watts-Strogatz and Boccaletti-Hwang-Latora), which in turn shows that the ksi-centrality distinguishes real networks from artificial. 

In this paper, we constructed a series of invariants based on degree and ksi-centrality that are also well described by the Weibull distribution and have similar properties for distinguishing between real and artificial networks. These new invariants are based on Laplacian matrix and connected to a Laplacian matrix spectrum.

\section{Generalization of degree and ksi centralities}

\subsection{Theory}

Consider a graph $G$ as 1-dimensional simplicial complex. Let $G$ be simple undirected connected graph for simplicity. To define chain complex structure we should add orientation for $G$. Since our definition won't depend on orientation, we will choose orientation later. Let $C_k$ be the $k$-dimensional chain group. For this graph we have only two non-trivial dimensions: $C_0$ and $C_1$. The boundary map $C_0\xleftarrow{\d_1} C_1$ generates the map between adjoint spaces (i.e. linear functions on chains) $C^0\xrightarrow{\d^0} C^1$.

Let $(\cdot, \cdot)$ be a scalar product on $k$-dimensional chains $C_k$. By this product one can determine an adjoint map $C^0\xleftarrow{(\d^0)^*} C^1$ by flowing: for any two $f,g\in C^k,\; (\d^kf,g) = \big(f,(\d^k)^* g\big)$. We have
$$
C_0\xrightarrow{\d_1} C_1,
$$
$$
C^0\underset{(\d^0)^*}{\overset{\d^0}{\rightleftarrows}} C^1.
$$
The map $L = (\d^0)^*\d^0:C^0\rightarrow C^0$ is called Laplace-Beltrami operator and it corresponds to Laplacian matrix in the orthonormal basis in $C^0$.

Let's denote for some vertex $i$ the set of its $j$'s neighbors (i.e. all vertices precisely on the distance $j$ from $i$) by $\cN^j(i)$. Let's for fixed $i$ consider a characteristic function of its $j$'s neighbors $\chi^j_i$, that is 
$$
    \chi^j_i(k) = \begin{cases}
        1, & k\in \cN^j(i), \\
        0, & \text{otherwise.}
    \end{cases}
$$

We can consider this function $\chi^j_i$ as element of $C^0$, since it is defined us function on vertices. Note that any centrality can be considered as element of $C^0$. Also note that $\sum_j \chi^j_i\in \ker(L)$.

\begin{defin}
    Let's call \bf{$j$-neighborhood centrality} the following
    $$  
         \xi^j_{i} = \frac {\Big(L\chi^j_i, \chi^j_i\Big)} {\big(\chi^j_i, \chi^j_i\big)}, \;\text{for }i\in V(G).
    $$
\end{defin}

Let's consider some examples. 
\begin{enumerate}
    \item For the case $j=0$ the characteristic function $\chi^0_i$ is the characteristic function of the vertex $i$. Now let's fix $i$ and determine the orientation of $G$. Consider any orientation on induced subgraphs $\cN^j(i)$ and orient edges starting from $\cN^j(i)$ and ending in $\cN^{j+1}(i)$ for any $j$. If $j = 0$ the image $\d^0\big(\chi^0_i\big)$ is the element of $C^1$ which equals $-1$ on edges outer edges from the vertex $i$ and $0$ otherwise (see figure~\ref{fig1}). The image $(\d^0)^*\d^0\big(\chi^0_i\big)\in C^0$ and it is a function which equals $\deg(i)$ on the vertex $i$, $-1$ on its neighbors and $0$ otherwise. Hence, $\big(L\chi^0_i, \chi^0_i\big) = \deg(i)$ and $\xi^0_i = \deg(i)$. The Laplacian map $L:C^0\rightarrow C^0$ doesn't depend on orientation of a graph. $\big|E\big(\cdot\,,\, \cN^1(i)\cup\{i\}\big)\big|$

\begin{figure}[H]
    \centering
    \includegraphics[width=0.8\linewidth]{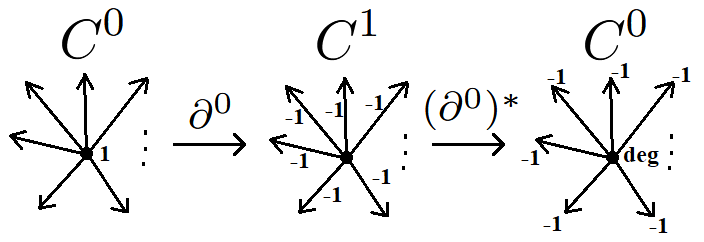}
    \caption{Image of Laplacian-Beltrami operator for the function $\chi^0_i$.}
    \label{fig1}
\end{figure}
    
    \item For the case $j=1$ the characteristic function $\chi^1_i$ is the characteristic function of adjacent vertices of $i$. The image $\d^0\big(\chi^1_i\big)\in C^1$ is the function which equals $1$ on edges outer edges from the vertex $i$, $-1$ on outer edges from $\cN^1(i)$ and $0$ otherwise (see figure~\ref{fig2}). The image $(\d^0)^*\d^0\big(\chi^1_i\big)\in C^0$ and it is a function which equals $-\deg(i)$ on the vertex $i$, the number of outer connections from the first neighborhood $\big|E\big(k, \cN^2(i)\cup\{i\}\big)\big|$ for each vertex $k\in \cN^1(i)$, $-\big|E\big(k, \cN^1(i)\big)\big|$ for $k\in \cN^2(i)$ and $0$ otherwise. Hence, $\big(L\chi^1_i, \chi^1_i\big) = |E(\cN^1(i), V(G)\setminus\cN^1(i))|$ and $\big(\chi^1_i, \chi^1_i\big)$ is the number of the first neighbors $\cN^1(i)$ that is $\deg(i)$. Therefore, $\xi^1_i = \xi_i$ defined in the article~\cite{Tuz}. 

\begin{figure}[H]
    \centering
    \includegraphics[width=0.8\linewidth]{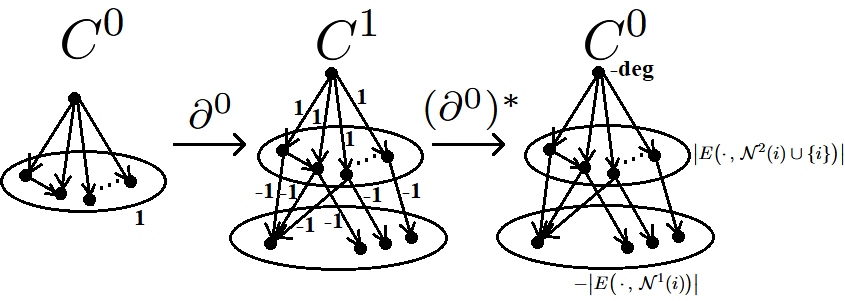}
    \caption{Image of Laplacian-Beltrami operator for the function $\chi^1_i$.}
    \label{fig2}
\end{figure}

\end{enumerate}

By these examples we see that this is a generalization of two invariants of real networks: degree and ksi-centrality. Let's note that this centrality can be also rewritten in terms of graph characteristics.

\begin{thm}\label{thm1}
    For any graph $j$-neighborhood centrality
    $$
        \xi^j_i = \frac {\big|\out\big(\cN^j(i)\big)\Big|} {\big|\cN^j(i)\big|},
    $$
    where $\out(H) = E(H, V(G)\setminus H)$ the set of outer connections from the set $H$ to other vertices.
\end{thm}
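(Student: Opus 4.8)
The plan is to exploit the factorization $L = (\d^0)^*\d^0$ so that the numerator becomes a squared norm, after which everything reduces to a well-known identity for the Laplacian quadratic form. First I would dispose of the denominator: since the vertex basis of $C^0$ is orthonormal, $\big(\chi^j_i,\chi^j_i\big) = \sum_k \chi^j_i(k)^2 = \big|\cN^j(i)\big|$, because $\chi^j_i$ takes only the values $0$ and $1$. Thus the whole problem is to show that $\big(L\chi^j_i,\chi^j_i\big) = \big|\out\big(\cN^j(i)\big)\big|$.

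For the numerator I would use the defining property of the adjoint $(\d^0)^*$ with $f = g = \chi^j_i$, which gives
$$
\big(L\chi^j_i,\chi^j_i\big) = \big((\d^0)^*\d^0\chi^j_i,\chi^j_i\big) = \big(\d^0\chi^j_i,\d^0\chi^j_i\big) = \big\|\d^0\chi^j_i\big\|^2 .
$$
This rewriting is the key step, and it already explains the remark that the answer cannot depend on orientation: reversing the orientation of an edge flips both the corresponding basis vector of $C^1$ and the value of $\d^0\chi^j_i$ on it, leaving the square unchanged. Next I would compute $\d^0\chi^j_i$ edge by edge. Since $\d^0$ is dual to the boundary map $\d_1$, which sends an edge oriented from $u$ to $v$ to the chain $v-u$, we get $\big(\d^0 f\big)(e) = f(v) - f(u)$ for every $f \in C^0$. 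In the orthonormal edge basis this yields the standard expression
$$
\big\|\d^0\chi^j_i\big\|^2 = \sum_{e=\{u,v\}\in E(G)} \big(\chi^j_i(u)-\chi^j_i(v)\big)^2 .
$$

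Finally I would evaluate the summand for the characteristic function $\chi^j_i = \chi_{\cN^j(i)}$. The term $\big(\chi^j_i(u)-\chi^j_i(v)\big)^2$ equals $1$ exactly when precisely one of the endpoints $u,v$ lies in $\cN^j(i)$, and equals $0$ otherwise; hence only the edges crossing the boundary of $\cN^j(i)$ contribute, each contributing exactly $1$. Summing therefore counts $\big|E\big(\cN^j(i),\,V(G)\setminus\cN^j(i)\big)\big| = \big|\out\big(\cN^j(i)\big)\big|$, and dividing by $\big|\cN^j(i)\big|$ gives the stated formula. I expect the only genuine obstacle to be bookkeeping: confirming the sign convention $\big(\d^0 f\big)(e) = f(v)-f(u)$ and checking that squaring removes any residual orientation dependence, so that the identity holds for the arbitrary orientation fixed earlier. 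Everything else is a direct specialization of the classical Laplacian quadratic-form identity to the indicator of $\cN^j(i)$.
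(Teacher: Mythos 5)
Your proposal is correct and follows essentially the same route as the paper: both reduce the claim to the Laplacian quadratic-form identity $\big(Lx,x\big)=\sum_{\{u,v\}\in E(G)}(x_u-x_v)^2$ applied to the indicator of $\cN^j(i)$, with each cross term contributing $1$ exactly on the edges of $\out\big(\cN^j(i)\big)$, and the denominator counted as the number of ones. The only difference is that the paper cites this identity as known, while you re-derive it from $L=(\d^0)^*\d^0$ and the coboundary formula (also verifying orientation-independence), which is a harmless elaboration rather than a different argument.
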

\begin{proof}
    It is very simple fact from the formula for Laplacian matrix bilinear form: $\big(L x, x\big) = \sumt_{k,l\in V(G), k\sim l}(x^k-x^l)^2$ and thus $(x_k-x_l)^2 = 1$ only if $x_k = 1$ (and thus $k\in \cN^j(i)$) and $x_l = 0$ (and thus $l\notin \cN^j(i), l\sim k$) and vice versa.
\end{proof}

For the case if graph has isolated vertex $i$ we define $\xi^j_i = 0$. Also let's note that these centralities are connected to the Chegeer number and  maximal eigenvalue of Laplacian matrix (by definitions).

\begin{thm}\label{st1}  Consider an undirected graph $G$. Let $h(G)$ be the Chegeer number of $G$. If $\big|\cN^j(i)\big|\leq\frac n 2$, then $\xi^j_i\geq h(G)$.
\end{thm}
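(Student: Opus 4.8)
The plan is to recognize that, once Theorem~\ref{thm1} is applied, the quantity $\xi^j_i$ is exactly the edge-expansion ratio of the single vertex set $\cN^j(i)$, and that the Cheeger number $h(G)$ is by definition the minimum of this ratio over all sufficiently small vertex sets. The whole statement should then follow from the variational (minimum) characterization of $h(G)$ by specializing the minimum to one concrete admissible set.

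First I would recall the definition of the Cheeger (isoperimetric) number for an undirected graph $G$ on $n$ vertices,
$$
    h(G) = \mint_{\substack{S\subseteq V(G)\\ 0<|S|\leq n/2}} \frac{\big|E(S, V(G)\setminus S)\big|}{|S|},
$$
where $E(S, V(G)\setminus S)$ is the set of edges with exactly one endpoint in $S$. Next I would invoke Theorem~\ref{thm1} to rewrite
$$
    \xi^j_i = \frac{\big|\out(\cN^j(i))\big|}{\big|\cN^j(i)\big|} = \frac{\big|E(\cN^j(i), V(G)\setminus \cN^j(i))\big|}{\big|\cN^j(i)\big|},
$$
observing that $\out(H) = E(H, V(G)\setminus H)$ is precisely the edge boundary appearing in the numerator of the Cheeger ratio, so that $\xi^j_i$ is literally the Cheeger ratio evaluated at the set $S = \cN^j(i)$.

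The key step is then to check that the set $\cN^j(i)$ is admissible for the minimization defining $h(G)$. It is nonempty for any non-isolated vertex $i$ (the isolated case being handled separately by the convention $\xi^j_i = 0$), and the hypothesis $\big|\cN^j(i)\big|\leq n/2$ is exactly the upper bound on size required for eligibility. Since $h(G)$ is the minimum of the ratio over all admissible sets and $\cN^j(i)$ is one such set, we obtain $\xi^j_i \geq h(G)$ directly.

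I do not anticipate a genuine obstacle: the result is a direct specialization of the minimum in the definition of $h(G)$ to a single concrete set, so no estimation is needed. The only points that require care are conventions — confirming that $\out(\cN^j(i))$ coincides with the edge boundary $E(S, V(G)\setminus S)$ used in the Cheeger constant, and emphasizing that the constraint $\big|\cN^j(i)\big|\leq n/2$ is precisely what makes $\cN^j(i)$ eligible for the minimum (so that the inequality is not claimed, and indeed could fail, for overly large neighborhoods).
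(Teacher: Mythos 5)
Your proof is correct and is exactly the argument the paper intends: the paper offers no separate proof, remarking only that the connection to the Cheeger number holds ``by definitions,'' which is precisely your observation that Theorem~\ref{thm1} exhibits $\xi^j_i$ as the Cheeger ratio of the admissible set $S=\cN^j(i)$, so the minimum defining $h(G)$ bounds it from below. One small refinement: nonemptiness of $\cN^j(i)$ can fail not only for isolated vertices but for any $j$ exceeding the eccentricity of $i$ (even in a connected graph), so the theorem implicitly assumes $\cN^j(i)\neq\emptyset$ rather than merely that $i$ is not isolated.
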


Note that for real networks $\big|\cN^j(i)\big|\leq\frac n 2$ is the common case.

\begin{thm}\label{st1}  Consider an undirected graph $G$. Let $0=\l_1\leq\l_2\leq\cdots\leq\l_{n}$ be the Laplacian matrix spectrum. Then $\xi^j_i\leq\l_n$.
\end{thm}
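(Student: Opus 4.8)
The plan is to recognize $\xi^j_i$ as a Rayleigh quotient of the Laplacian operator $L$ and then invoke the standard spectral bound. Writing $x = \chi^j_i$, the definition gives
$$
\xi^j_i = \frac{(Lx, x)}{(x, x)},
$$
which is exactly the Rayleigh quotient $R_L(x)$. Since $L$ is the matrix of $(\d^0)^*\d^0$ in an orthonormal basis of $C^0$, it is real symmetric (indeed positive semidefinite), so the claim reduces to the familiar fact that a Rayleigh quotient of a symmetric matrix never exceeds its largest eigenvalue.

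First I would dispose of the degenerate case: if $i$ is isolated then $\xi^j_i = 0$ by convention, and since $\l_n \geq 0$ (the Laplacian is positive semidefinite), the inequality holds trivially. Otherwise $\cN^j(i)$ is nonempty, so $x = \chi^j_i \neq 0$ and the quotient is well defined. Next I would apply the spectral theorem to $L$: choose an orthonormal eigenbasis $v_1, \dots, v_n$ of $C^0$ with $L v_k = \l_k v_k$ and $0 = \l_1 \leq \cdots \leq \l_n$. Expanding $x = \sum_k c_k v_k$ with $c_k = (x, v_k)$, orthonormality yields
$$
(Lx, x) = \sum_{k=1}^n \l_k c_k^2, \qquad (x, x) = \sum_{k=1}^n c_k^2.
$$

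Finally, the estimate is immediate: since each $\l_k \leq \l_n$, we have $\sum_k \l_k c_k^2 \leq \l_n \sum_k c_k^2$, and dividing by $(x,x) = \sum_k c_k^2 > 0$ gives $\xi^j_i = R_L(x) \leq \l_n$. I do not expect any genuine obstacle here — the statement is a direct specialization of the Courant–Fischer max bound $\l_n = \max_{x \neq 0} R_L(x)$ to the particular test vector $\chi^j_i$. The only points requiring a word of care are the nonvanishing of $\chi^j_i$ (handled by separating the isolated-vertex case) and the symmetry and positive semidefiniteness of $L$, both of which follow from its presentation as $(\d^0)^*\d^0$ in an orthonormal basis.
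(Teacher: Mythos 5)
Your proof is correct and matches the paper's intent: the paper offers no written proof at all, remarking only that the bound holds ``by definitions,'' and the implicit argument is exactly your observation that $\xi^j_i$ is the Rayleigh quotient $(L\chi^j_i,\chi^j_i)/(\chi^j_i,\chi^j_i)$, which Courant--Fischer bounds by $\l_n$. One pedantic caveat: for a non-isolated vertex $i$ the set $\cN^j(i)$ can still be empty when $j$ exceeds the eccentricity of $i$, so your nondegeneracy case split should be on $\cN^j(i)\neq\emptyset$ rather than on $i$ being non-isolated (with the empty case handled by the same zero convention).
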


\subsection{Calculations}

The theorem~\ref{thm1} can be used for calculation of $j$-neighborhood centralities sequentially by $j$. Let's store two characteristic matrices $\chi^{j-1}_i(k)$ and $\chi^{j-2}_i(k)$ in memory for $i, k\in V(G)$ from two pervious steps. Let $k$ be the column index. The algorithm is following:
\begin{enumerate}
    \item Initialization: $\chi^{0} = I, \chi^{1} = A$, where $I$ is the identity matrix and $A$ is adjacency matrix.
    \item Calculate the matrix $\chi^{j} = \xor\big(A \chi^{j-1}, \chi^{j-2}\big)$,
    \item For each $i$ calculate $\Big(L\chi^j_i, \chi^j_i\Big)$ and $\big(\chi^j_i, \chi^j_i\big) = \underset{k}{\prsum}(\chi^j_i)$.
\end{enumerate}

For $j = 0$ and $j = 1$ $j$-neighborhood centralities are degree and ksi centralities and they can be calculated separately. For calculations of ksi-centrality see~\cite{Tuz}. Note that calculations in this algorithm can be improved by storing additionally the matrix $A\chi^{j-1}$ which is calculated when the matrix $L\chi^j$ calculated on the previous step.


\section{Comparison of distributions for real networks and artificial}

We calculate distributions of $\xi^j$ centralities for $t=j=1,2,3,4$ based on the datasets from~\cite{Tuz} (real networks) and for Erdos-Renyi, Barabási-Albert, Watts-Strogatz and Boccaletti-Hwang-Latora (artificial). For real networks we observed similar behavior as for the degree distribution and the ksi distribution~\cite{Tuz}: these distributions are well approximated by the Weibull distribution, have a right-skewed skewness, and are nearly linear on the log-log plot with some noise and tail (Figures~\ref{f1}--\ref{f_last}). Furthermore, we observed the same result for Erdos-Rényi, Barabási-Albert, Watts-Strogatz, and Boccaletti-Hwang-Latora networks as in~\cite{Tuz}: the distributions are centered (figures~\ref{fER},~\ref{fWS}). We also observed the same result for the Pearson skewness coefficients: there is a gap between the real networks and the models for each invariant, and the same threshold level of 1 allows us to distinguish them (in the Boccaletti-Hwang-Latora model the fourth invariant turned out to be too noisy).

\section{Discussion}

In this article we constructed a sequence of centralities called $j$-neighborhood centralities based on Laplacian matrix or Laplacian-Beltrami operator for a graph. From the mathematical point of view these centralities have connections with  the Chegeer number and maximal eigenvalue of Laplacian matrix. From the point of view of applications these centralities are generalization of two invariants of real networks: degree centrality and ksi-centrality (cases $j=0, 1$). They have right-skewed distributions for real networks, different (centered) for artificial and well fitted with Weibull distribution. We have shown that a threshold value of 1 in Pearson's skewness coefficients distinguish real networks from artificial ones well for each invariant.

We call these $j$-neighborhood centralities invariants because for every network, regardless of its application domain (biological, social, internet, etc.), regardless of the ratio between the number of vertices and edges, the distributions of these centralities turn out to be right-skewed, well-fitted by the Weibull distribution, and also show a linear behavior with a heavy tail on the log plot (as is the case for the scale-free property). However, for the scale-free property, there is a strong theory explaining its foundations and a model for constructing networks with this property, but our centrality measures represent entirely new invariants, and many of them remain to be studied. Nevertheless, it is easy to apply them even now. For example, it can be applied for checking whether the current network is real-like or artificial or, since it is invariants of real networks, they are important candidates for features in GML models or graph-based machine learning models. 

The support from the Basic Research Program of HSE University is gratefully acknowledged.

\begin{figure}[h]
    \centering
    \includegraphics[width=\linewidth]{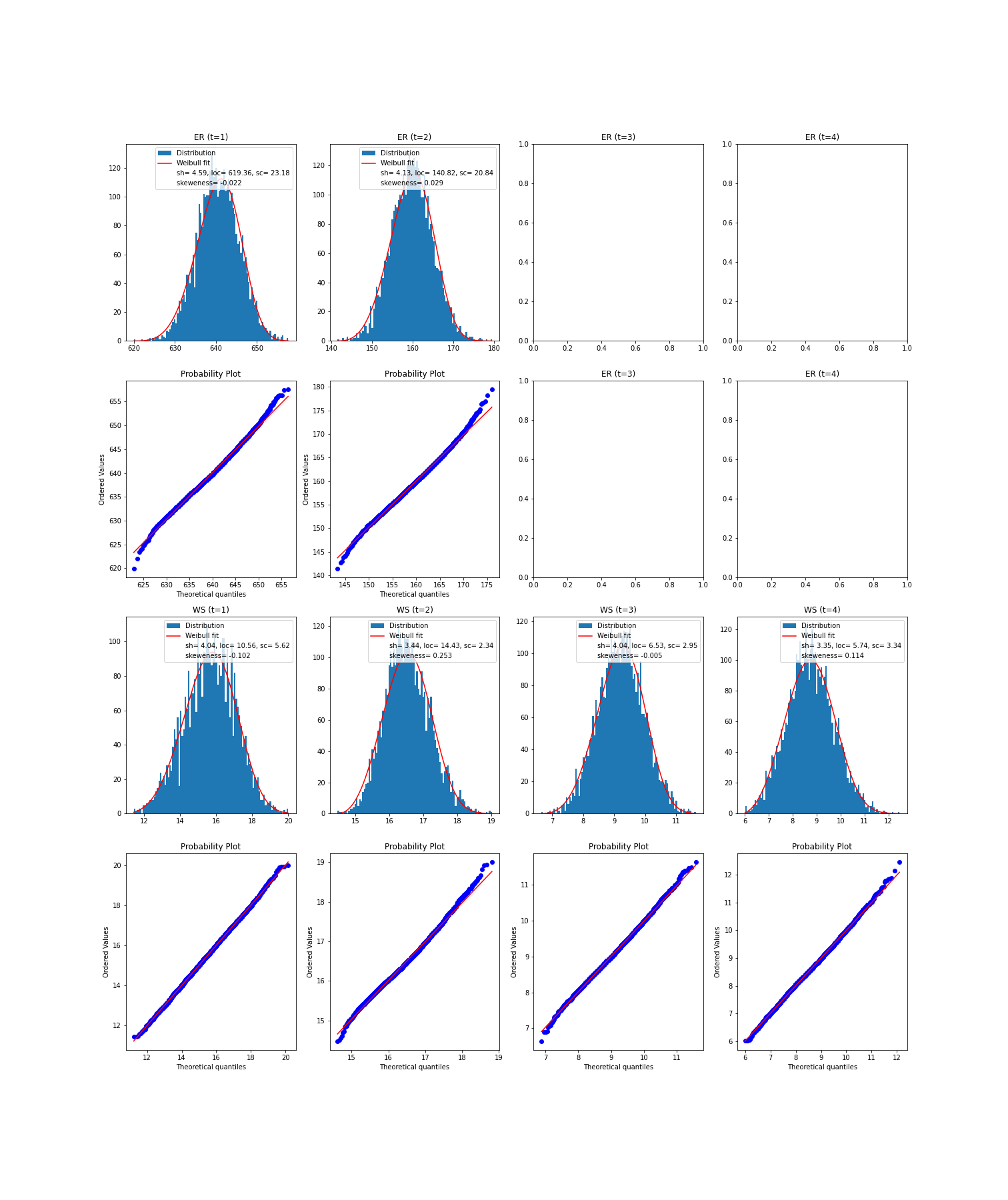}
    \caption{
    Distributions of $\xi^j$ centralities for $t=j=1,2,3,4$ for Erdos-Renyi $(n,p) = (4000, 0.02)$ and Watts-Strogatz $(n, k, p) = (4000, 21, 0.3)$ networks fitted with Weibull distribution and their Pearson's moment coefficients of skewness.}
    \label{fER}
\end{figure}

\begin{figure}[h]
    \centering
    \includegraphics[width=\linewidth]{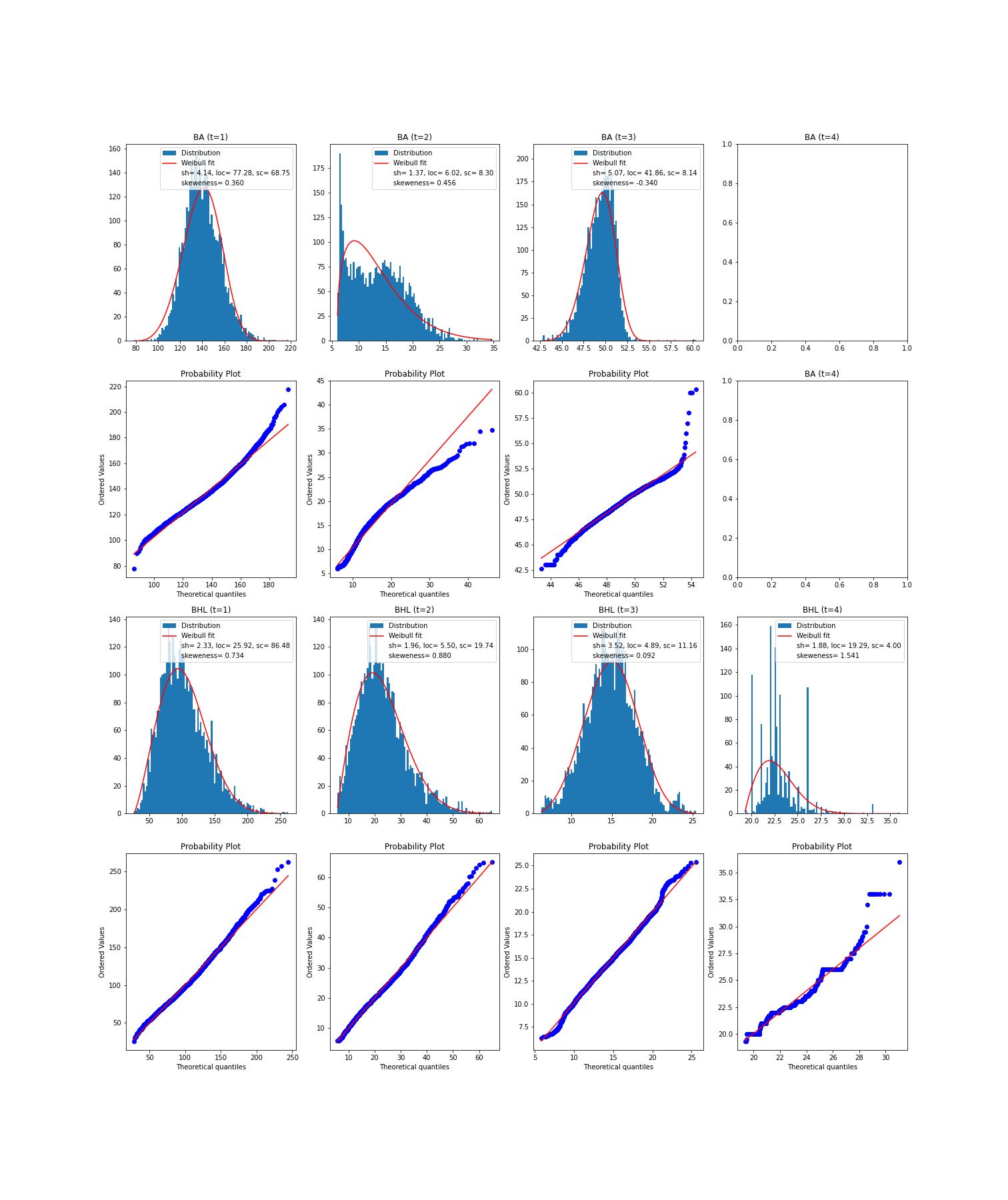}
    \caption{Distributions of $\xi^j$ centralities for $t=j=1,2,3,4$ for Barabasi-Albert $(n,m) = (4000, 43)$ and Boccaletti-Hwang-Latora $(n, m, n_0) = (4000, 20, 100)$ networks fitted with Weibull distribution and their Pearson's moment coefficients of skewness.}
    \label{fWS}
\end{figure}

\begin{figure}[h]
    \centering
    \includegraphics[width=\linewidth]{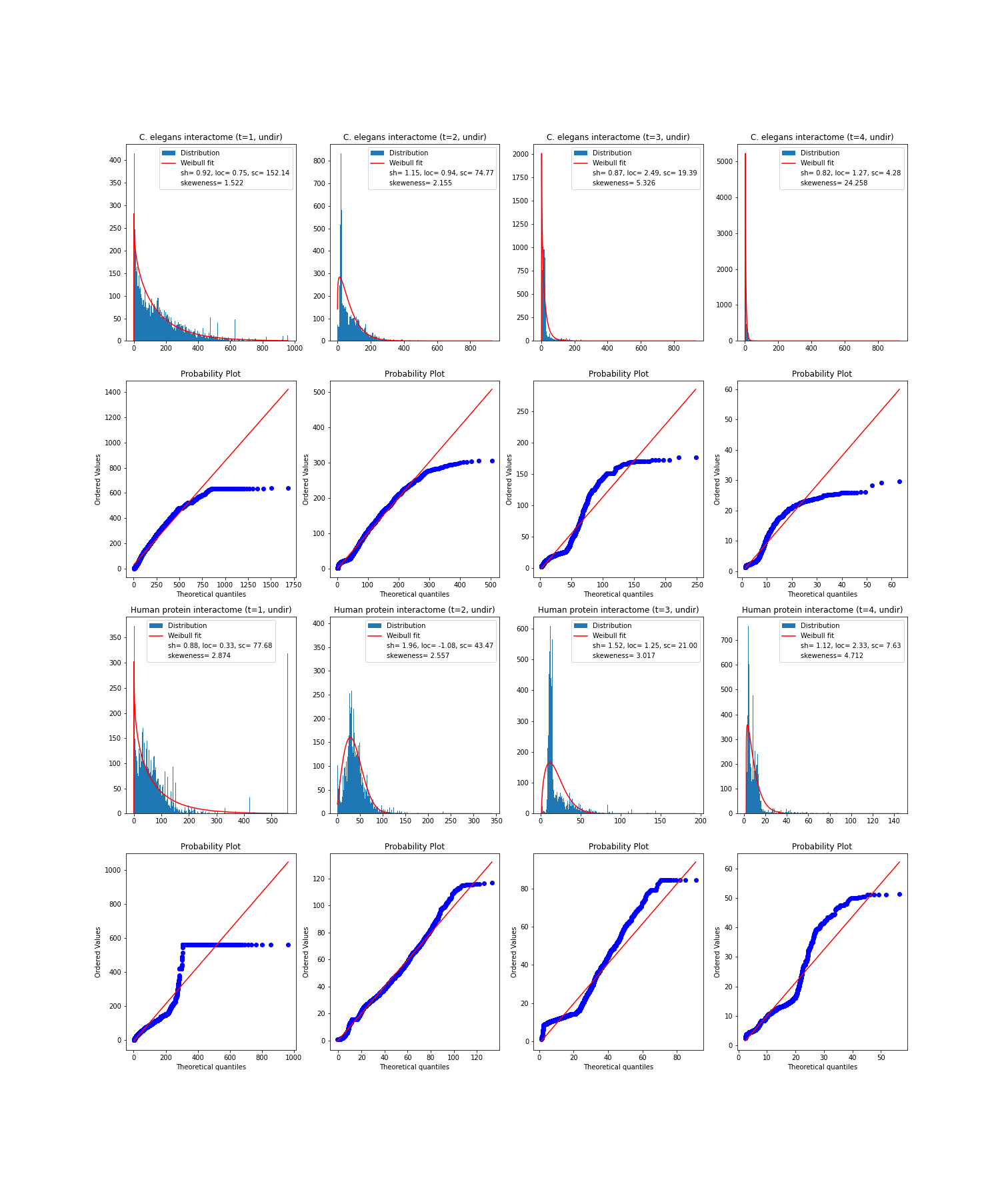}
    \caption{Distributions of $\xi^j$ centralities for $t=j=1,2,3,4$ for real networks data from~\cite{Tuz} fitted with Weibull distribution and their Pearson's moment coefficients of skewness.}
    \label{f1}
\end{figure}

\begin{figure}[h]
    \centering
    \includegraphics[width=\linewidth]{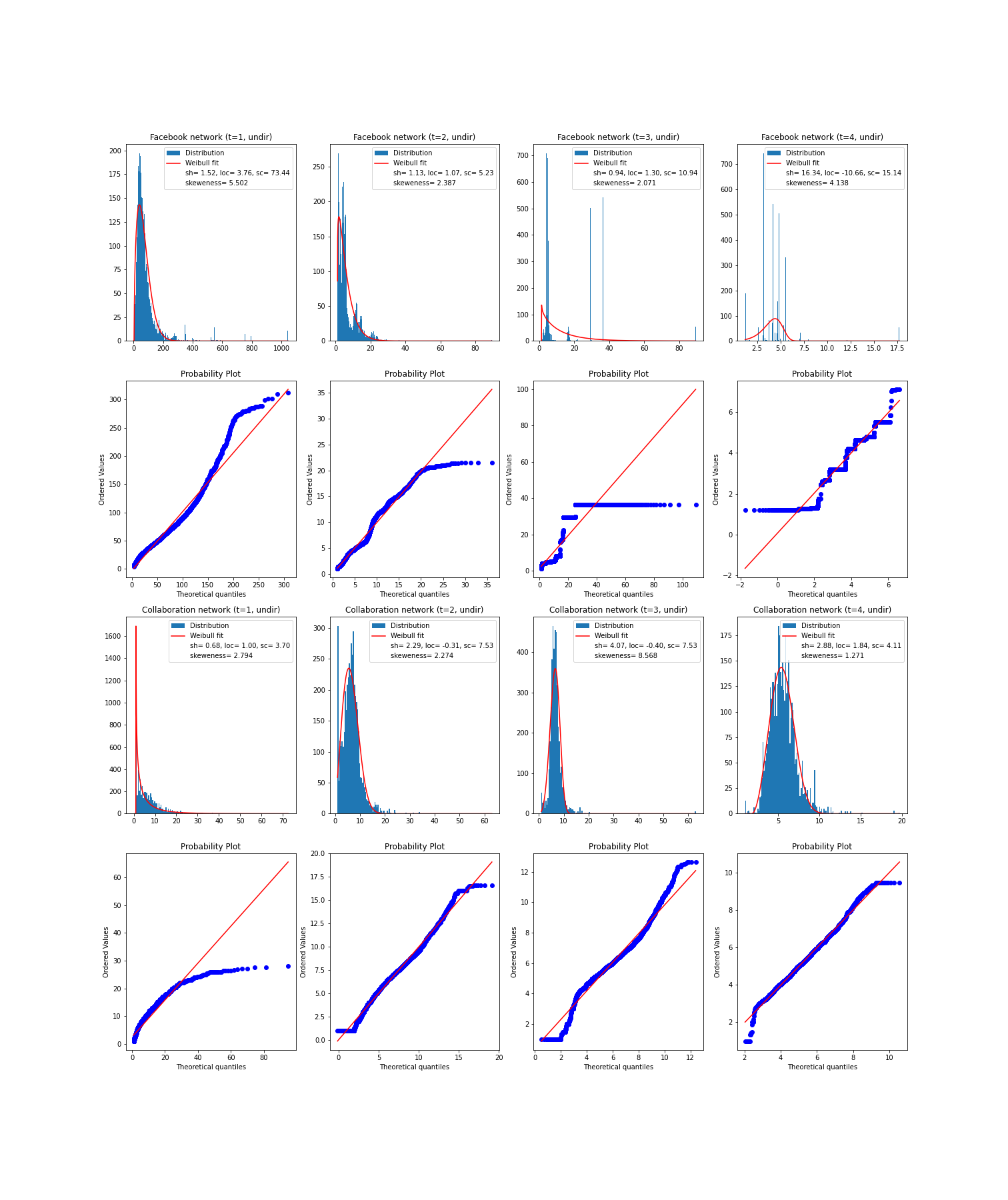}
    \caption{Distributions of $\xi^j$ centralities for $t=j=1,2,3,4$ for real networks data from~\cite{Tuz} fitted with Weibull distribution and their Pearson's moment coefficients of skewness.}
\end{figure}

\begin{figure}[h]
    \centering
    \includegraphics[width=\linewidth]{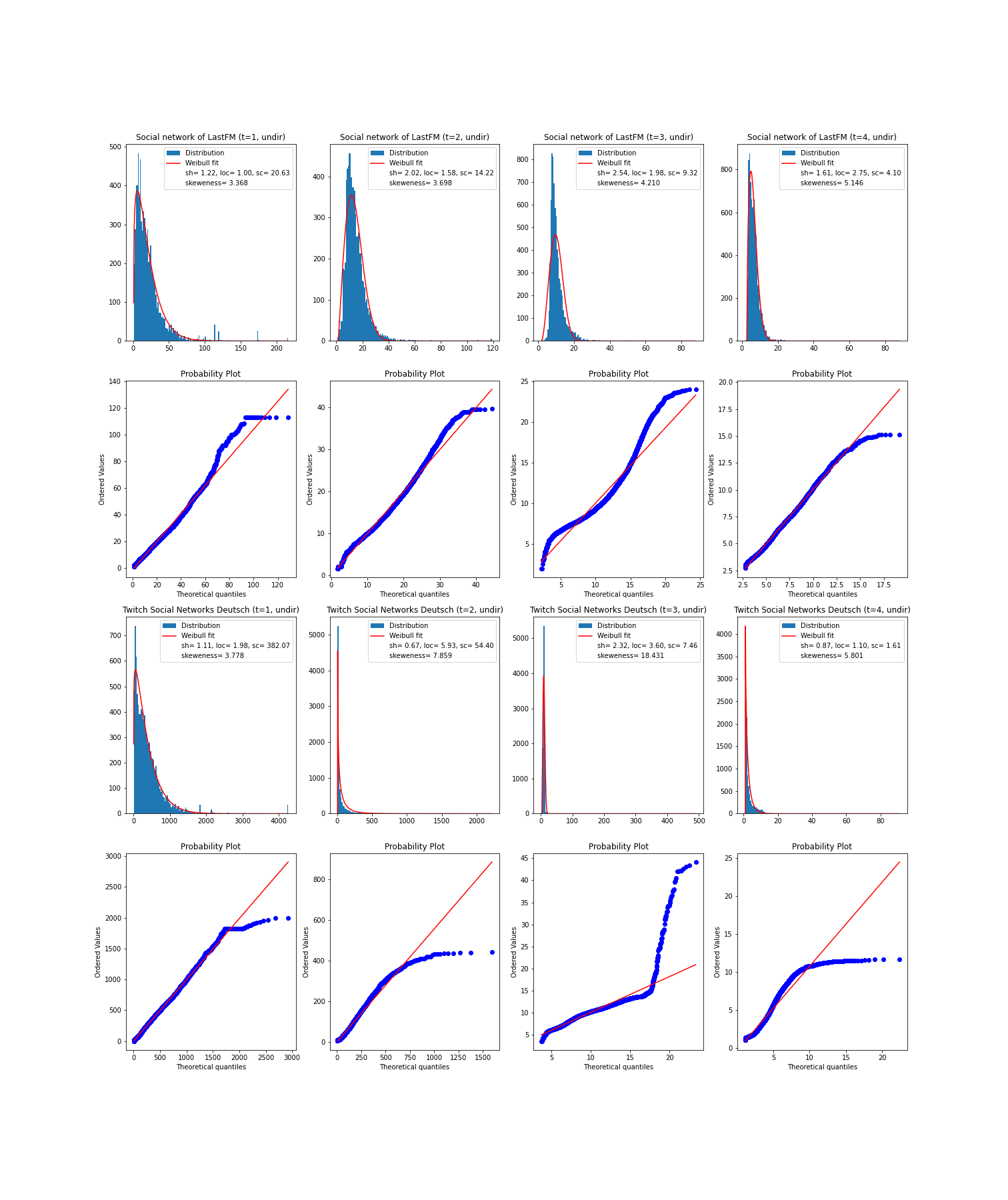}
    \caption{Distributions of $\xi^j$ centralities for $t=j=1,2,3,4$ for real networks data from~\cite{Tuz} fitted with Weibull distribution and their Pearson's moment coefficients of skewness.}
\end{figure}

\begin{figure}[h]
    \centering
    \includegraphics[width=\linewidth]{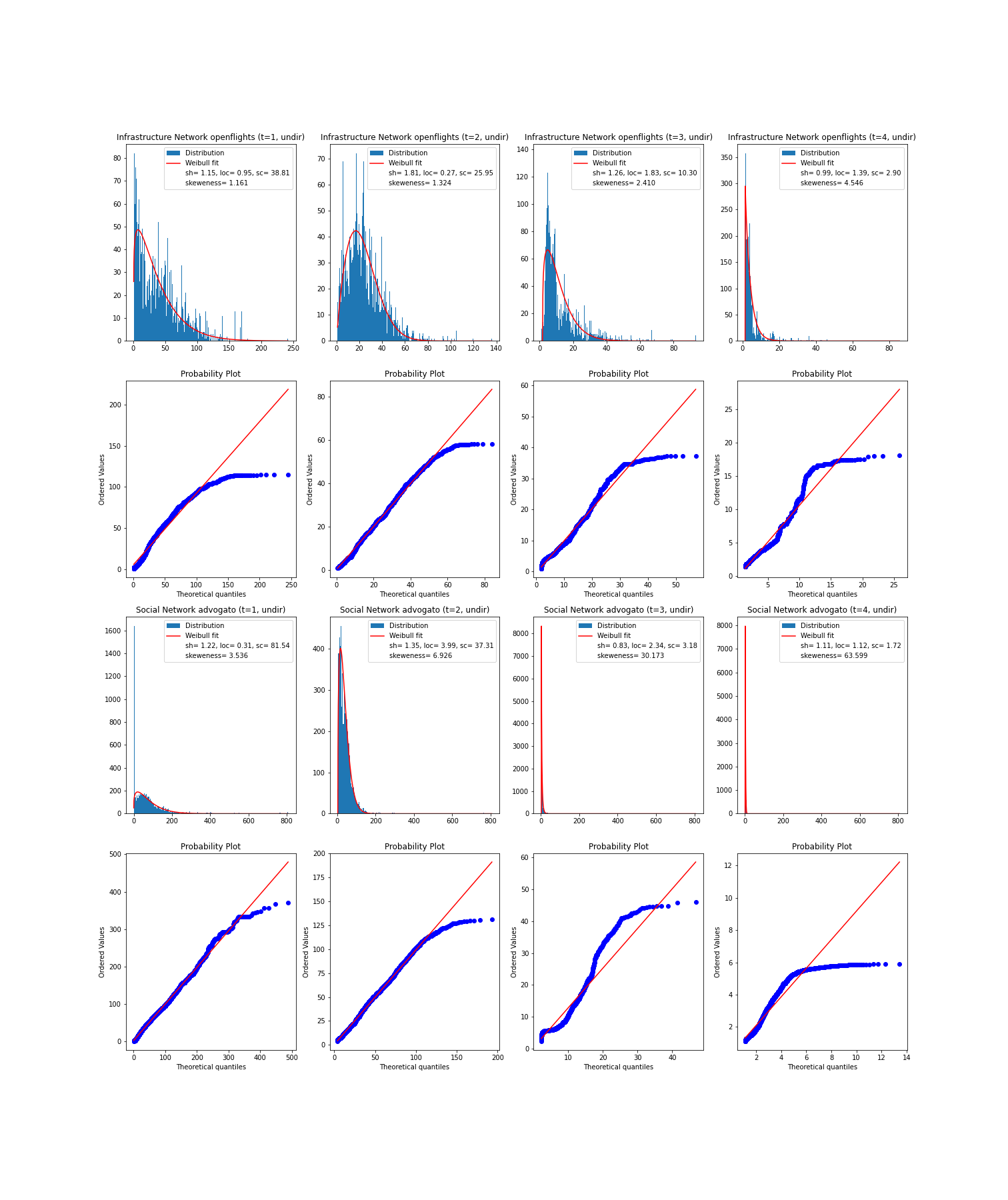}
    \caption{Distributions of $\xi^j$ centralities for $t=j=1,2,3,4$ for real networks data from~\cite{Tuz} fitted with Weibull distribution and their Pearson's moment coefficients of skewness.}
\end{figure}

\begin{figure}[h]
    \centering
    \includegraphics[width=\linewidth]{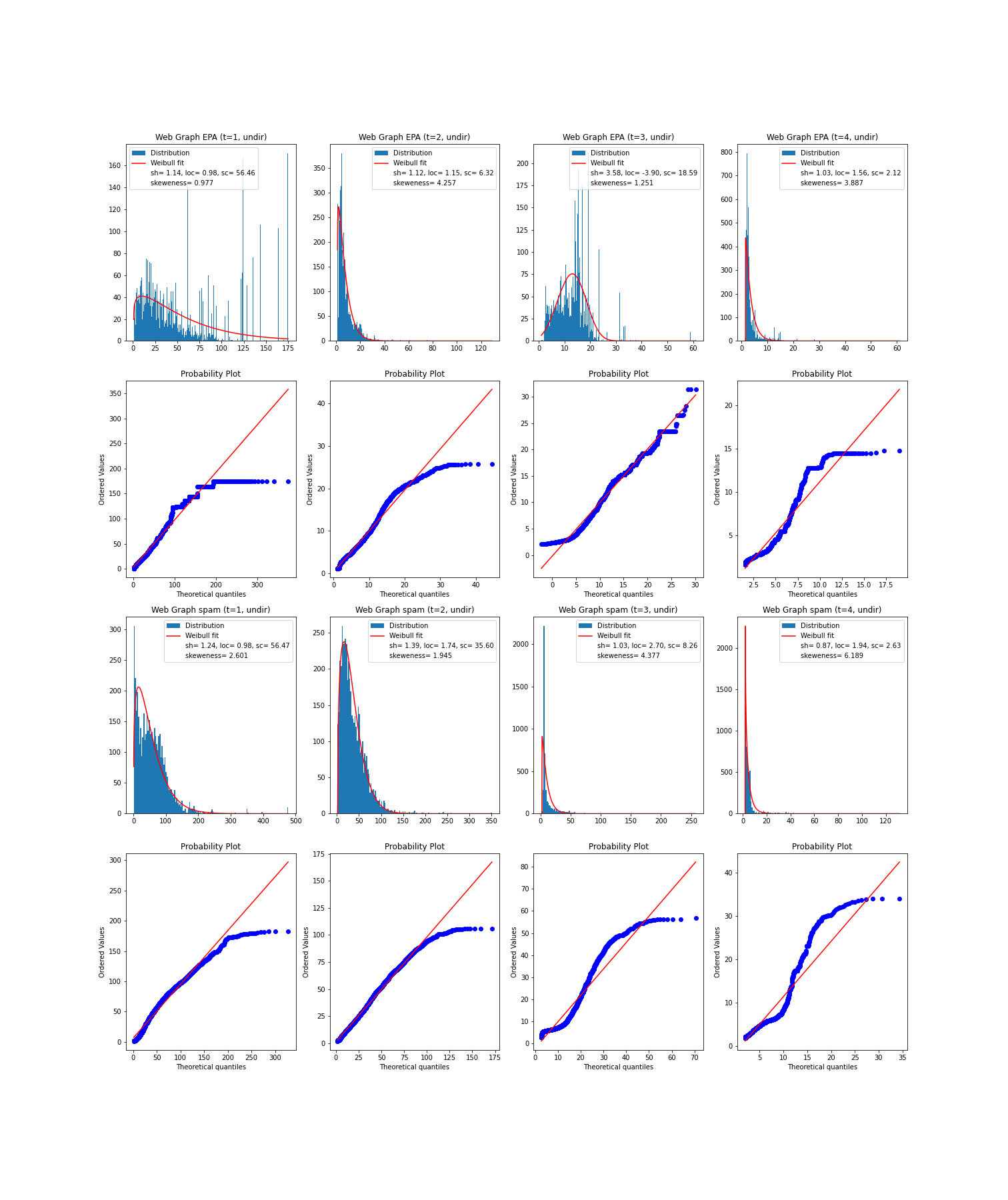}
    \caption{Distributions of $\xi^j$ centralities for $t=j=1,2,3,4$ for real networks data from~\cite{Tuz} fitted with Weibull distribution and their Pearson's moment coefficients of skewness..}
\end{figure}

\begin{figure}[h]
    \centering
    \includegraphics[width=\linewidth]{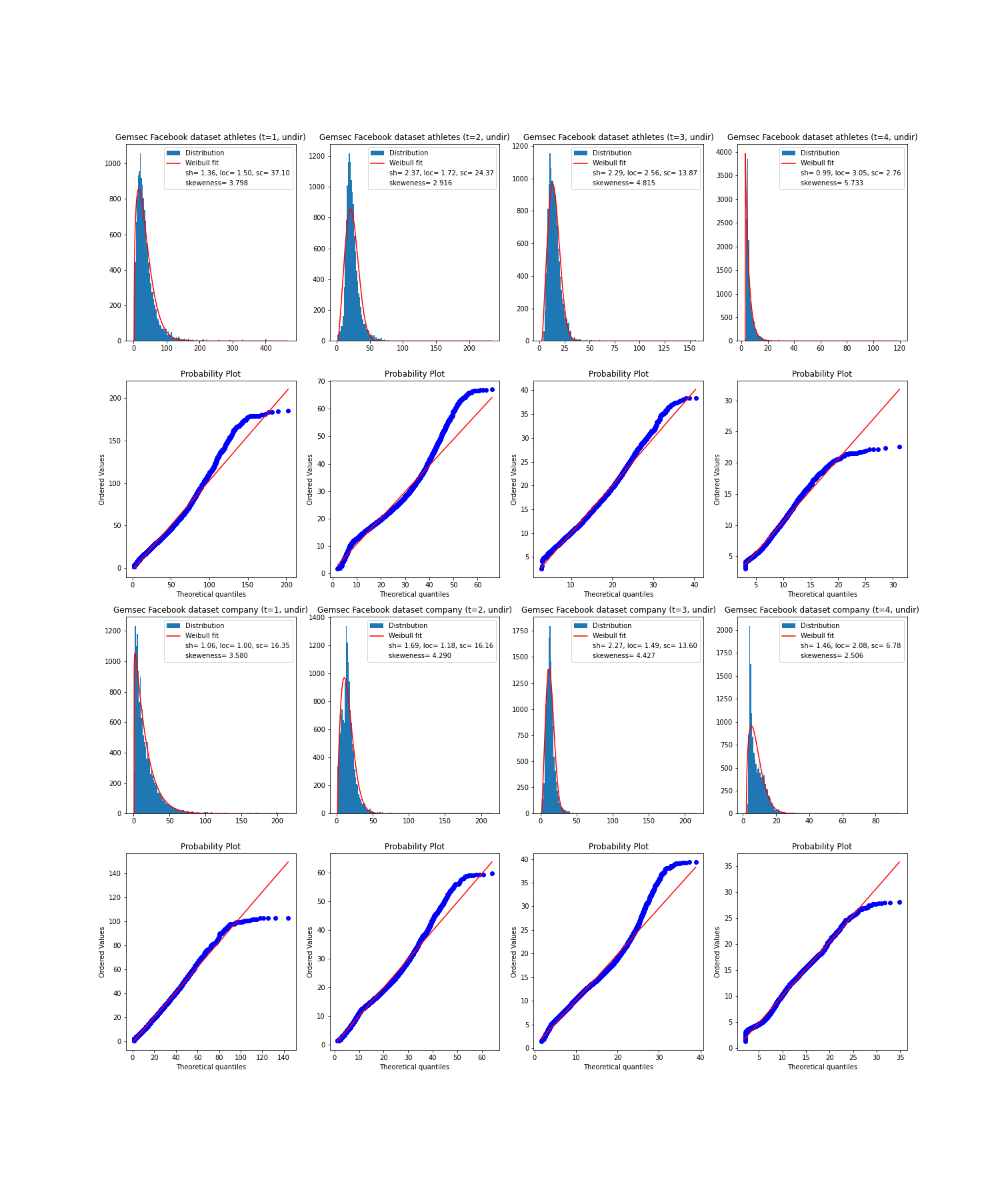}
    \caption{Distributions of $\xi^j$ centralities for $t=j=1,2,3,4$ for real networks data from~\cite{Tuz} fitted with Weibull distribution and their Pearson's moment coefficients of skewness.}
\end{figure}

\begin{figure}[h]
    \centering
    \includegraphics[width=\linewidth]{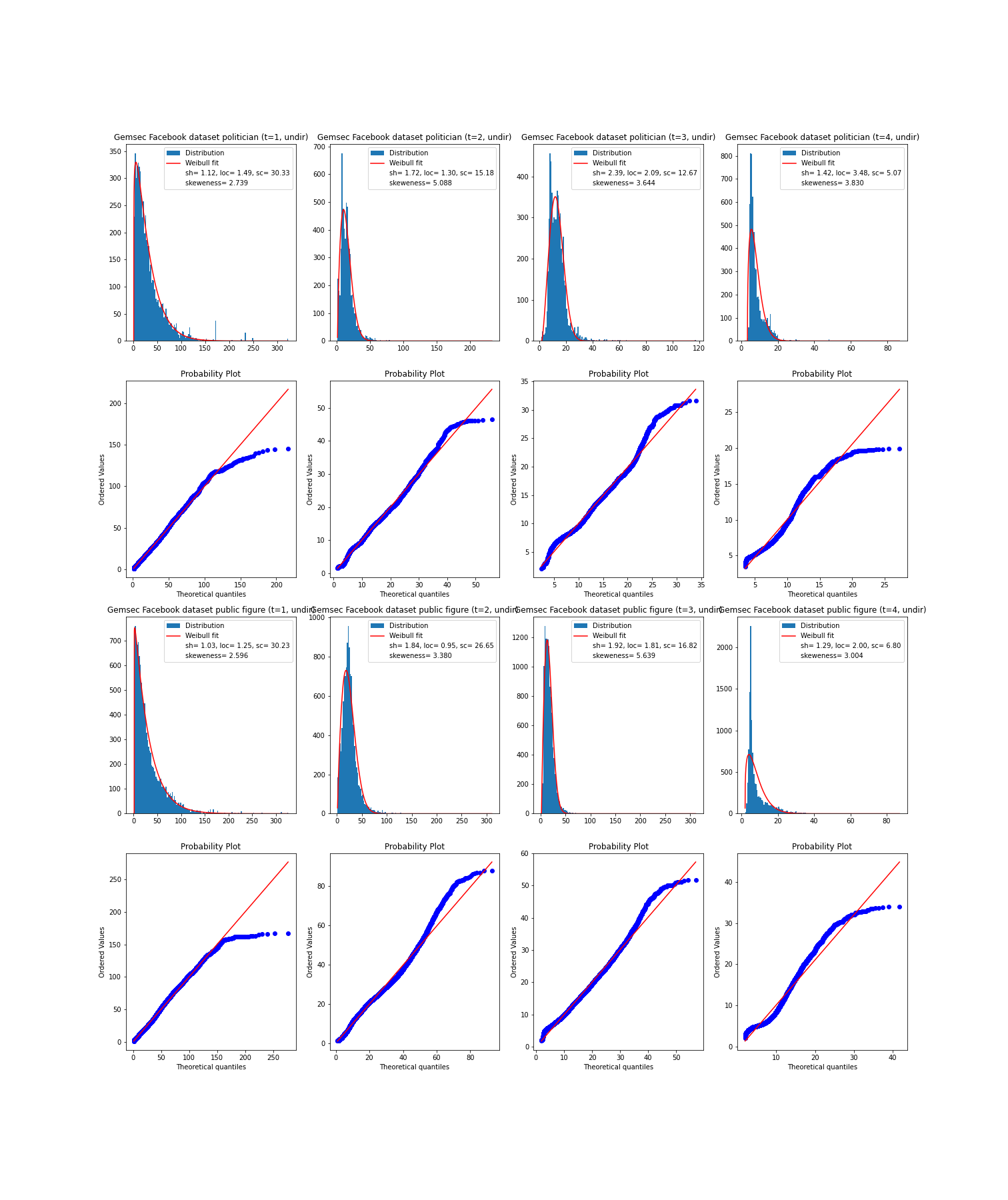}
    \caption{Distributions of $\xi^j$ centralities for $t=j=1,2,3,4$ for real networks data from~\cite{Tuz} fitted with Weibull distribution and their Pearson's moment coefficients of skewness.}
\end{figure}

\begin{figure}[h]
    \centering
    \includegraphics[width=\linewidth]{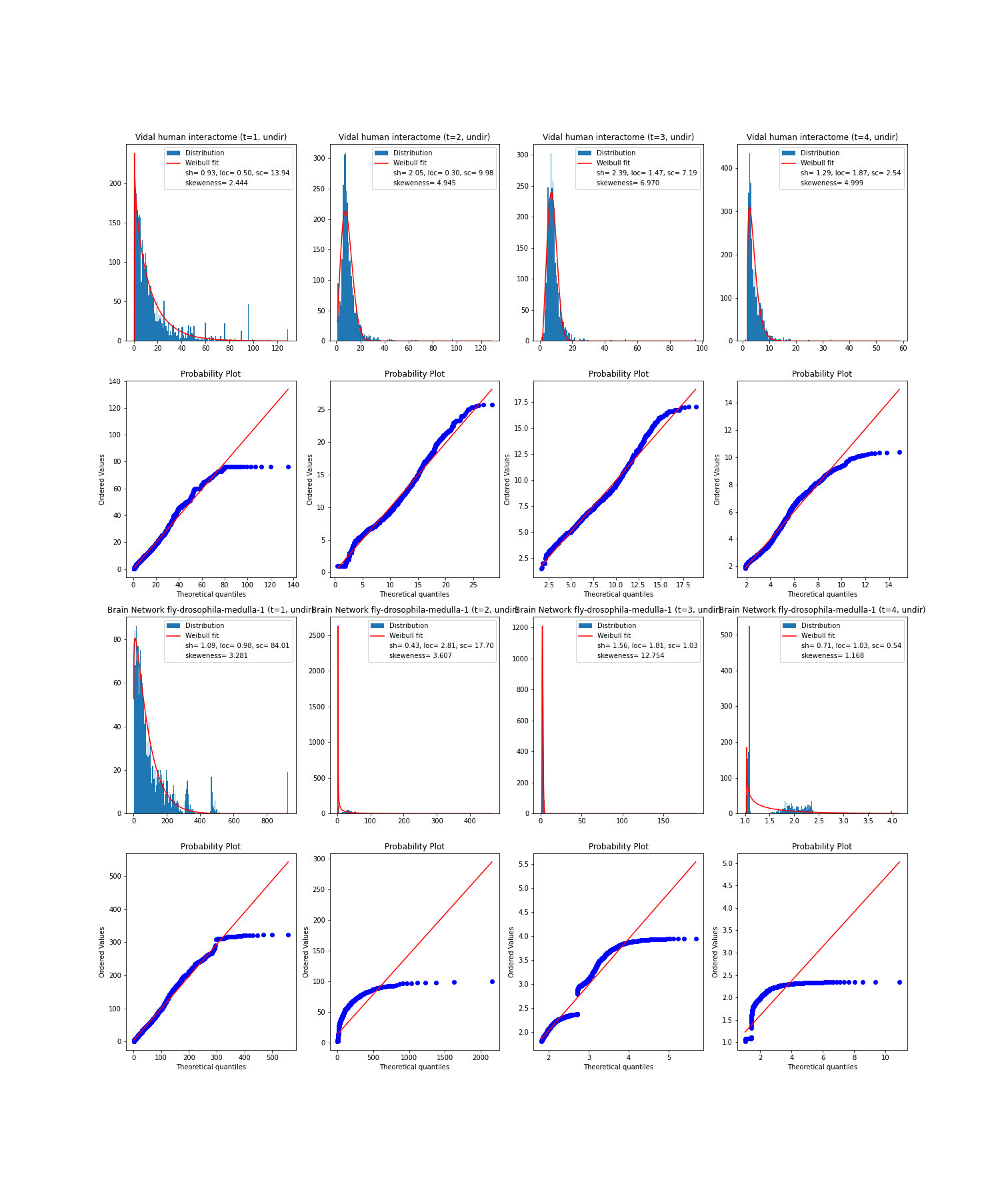}
    \caption{Distributions of $\xi^j$ centralities for $t=j=1,2,3,4$ for real networks data from~\cite{Tuz} fitted with Weibull distribution and their Pearson's moment coefficients of skewness.}
    \label{f_last}
\end{figure}

\end{document}